\newcommand{\ket}[1]{\ensuremath{\left|#1\right\rangle}} 
\newcommand{\bra}[1]{\ensuremath{\left\langle#1\right|}} 
\newtheorem{defi}{Definition}
\newtheorem{theo}{Theorem}
\newtheorem{lemm}{Lemma}
\title{Pseudo-telepathy games and genuine NS $k$-way nonlocality using graph states}
\author[2]{Anurag Anshu\thanks{a.anshu@iitg.ernet.in}} 
\author[1]{Mehdi Mhalla\thanks{mehdi.mhalla@imag.fr}\thanks{This work has been funded by the ANR-10-JCJC-0208 CausaQ grant.
}}
\affil[1]{CNRS LIG, University of Grenoble, France}
\affil[2]{I.I.T. Guwahati, India}
\date{}
\begin{document}
\maketitle
\sloppy
\begin{abstract}
We define a family of pseudo-telepathy games using graph states that extends the Mermin games. This family also contains a game used to define a quantum probability distribution that cannot be simulated by any number of nonlocal boxes. We extend this result, proving that the probability distribution obtained by the Paley graph state on 13 vertices (each vertex corresponds to a player) cannot be simulated by any number of 4-partite nonlocal boxes and that the Paley graph states on $k^{2}2^{2k-2}$ vertices provide a probability distribution that cannot be simulated by $k$-partite nonlocal boxes, for any $k$.
\end{abstract}

\section{Introduction}
 
Quantum nonlocality is one of those rare physical phenomena that are discovered to be deeply rooted in the foundations of physics long before they are properly understood and universally accepted. Originally used by Einstein, Podolosky and Rosen \cite{epr} in 1935 in their attempt to prove quantum mechanics incomplete, it was given a completely new avatar by John S. Bell in his seminal work \cite{bell} of 1964 and has now taken the form of a physical principle thanks to remarkable results like no-communication theorem \cite{peres} and provided interesting mathematical tools like nonlocal boxes \cite{popescu}.   

A nonlocal box refers to a virtual device that is \textit{non-signalling}, shared between multiple parties $(1,2\ldots n)$ and characterized by a joint probability distribution $P(a_1,a_2\ldots a_n|x_1,x_2\ldots x_n)$, where $(x_1,x_2\ldots x_n)$ is the `input' to the box and $(a_1,a_2\ldots a_n)$ is the `output'. The term \textit{nonlocal} implies that this probability cannot be written as $\sum_l p_l*P_l(a_1|x_1)*P_l(a_2|x_2)\ldots P_l(a_n|x_n)$ ($p_l>0,\sum_lp_l=1$), or equivalently that it cannot be simulated by a local classical protocol. \textit{Non-signalling} means that a set of players cannot acquire information about each other's input. In this paper we consider a strong version of \textit{non-signalling} to be satisfied by nonlocal boxes: even if a given set of players know each other's input and output, they cannot get information about the input of a player that is not in this set. This means that in the case where the inputs are in $\{0,1\}$ the probability distribution satisfies: for any input $(x_1,x_2\ldots x_n)$, output $(a_1,a_2\ldots a_n)$ and an index $i$, $\sum_{a_i} P(a_1,a_2\ldots a_n| x_1,\ldots x_{i-1},0,x_{i+1}\ldots x_n)=\sum_{a_i} P(a_1,a_2\ldots a_n| x_1,\ldots x_{i-1},1,x_{i+1}\ldots x_n)$. 

In bipartite case, the `PR Box', introduced in \cite{popescu}, is a nonlocal box that satisfies: $a_1 + a_2 = x_1.x_2 \bmod 2$. Its fundamental importance lies in the result that all the extremal points of bipartite non-signalling probability polytope are of this form\cite{barrett} and hence any non-signalling bipartite probability distribution can be simulated with PR boxes \cite{pironio}. PR boxes can also simulate many multipartite correlations as discussed in \cite{methot}. At present, only bipartite nonlocal boxes are well understood. A classification of extremal points of the nonlocal probability distribution polytope in tripartite scenario has been done recently in detail in \cite{scarani}, but a more intuitive understanding is still required. 
 
An important topic of interest in quantum information theory has been the characterization of quantum nonlocality. There have been two major approaches to this problem. The first is to consider the cost of simulating probability distribution exhibited by a physical system with nonlocal boxes \cite{gisin}, one way communication between observers \cite{toner}, bounded communication in the average or the worst case scenario \cite{branciardi}, etc.
 
The second approach is to look at the amount by which a nonlocal probability distribution violates Bell type inequalities such as the CHSH inequalities \cite{horne}. CHSH inequalities have been extended to the multipartite scenario by Svetlichny \cite{svetlinchny,seevinck} and weaker inequalities proposed by Pironio et. al. \cite{pironiomulti} that, within their notion of `k-way nonlocality', strengthen our intuitive understanding of multipartite nonlocal nature of certain quantum correlations. Another concept of `genuinely $k$-way NS nonlocal', as introduced in \cite{pironiomulti}, corresponds to those probability distributions that cannot be simulated by $k$-partite nonlocal boxes (nonlocal boxes that are shared between $k$ parties) and plays a central role in present article.  

We extend an important result in \cite{pironio} that uses only non-signalling to present a quantum correlation that cannot be simulated by PR Boxes. We show, without requiring any knowledge of the detailed nature of extremal points of nonlocal probability distribution polytope in multipartite setting, that there exist $(k^{2}2^{2k-2})$-partite quantum probability distributions obtained by Pauli measurements on graph states that cannot be simulated by $k$-partite nonlocal boxes, for any given $k$. This contrasts with Mermin-GHZ correlations \cite{methot} and nonlocal probability distributions of the form $P(a_1,a_2\ldots a_n|x_1,x_2\ldots x_n)=1/2^{n-1}$ if $a_1+a_2\ldots +a_n=f(x_1,x_2\ldots x_n)$ and 0 otherwise for any boolean function $f$ (as discussed in \cite{pironio}), both of which can be simulated by bipartite boxes (PR Boxes), for any $n$.

To achieve this goal, we consider the pseudo-telepathy games, an approach to understanding the nonlocal nature of quantum mechanics alongside Bell's inequalities \cite{bell}. Pseudo-telepathy games aim at providing a simple and natural interpretation of quantum nonlocality. They have been vividly described in Brassard et. al. \cite{broadbent} as protocols that can play important role in experimental verification of nonlocal nature of our world, in some cases even when measurement detectors are considerably inefficient \cite{hoyer}. On the theoretical side, they also provide an interesting measure of nonlocality, in terms of probability by which the best strategy of a classical player can win the games. 

We present a family of pseudo-telepathy games using graph states. If players share a graph state $\ket{G}$ then using a simple protocol consisting of measurements in the diagonal basis when the input is 1 and in the standard basis when the input is 0, they win perfectly the game defined on $G$. These games generalize a well-known Mermin's parity game, originally described as a 3-player game in \cite{greenberger} and studied as a general $n$-player game in \cite{brassard}. Moreover, the correlations considered in \cite{pironio} (and introduced in \cite{briegel}) that cannot be simulated using PR boxes, are the ones obtained in the game using graph $C_5$ (cycle on five vertices). 

Following the fact that $C_5$ is a special case of a more general family called Paley graphs, we show that the probability distribution obtained by the quantum strategy on Paley graph states on more than 5 vertices cannot be simulated by tripartite nonlocal boxes and that $4$-partite nonlocal boxes cannot simulate the probability distribution obtained in the game on the Paley graph state on 13 qubits. 
 
Finally, using a graph theoretic property called existential closure \cite{bonato1, erdos}, we infer that the probability distributions obtained with the Paley graph states on greater than $k^{2}2^{2k-2}$ qubits cannot be simulated by $k$-partite nonlocal boxes.     
  
\section{Pseudo-telepathy graph games}

We consider a game with $n$ players who are not allowed to communicate, each of them receives an input and is asked to provide an output. Given an input domain $I \subseteq I_1\times \ldots I_n$, and an output domain $O = O_1\times \ldots O_n$),
a game $\Gamma$ is characterized by a relation $ {\cal{L}}(\Gamma) \subseteq I \times O$ representing the set of loosing configurations \emph{i.e.}  if the players are asked questions $x_1,\ldots x_n$ and they answer $a_1,\ldots a_n$
with $x_1,\ldots x_n,a_1,\ldots a_n \in \  {\cal{L}} (\Gamma)$ the players lose the game.

The players win the game perfectly if losing configurations are never reached. Pseudo-telepathy games are games where using quantum mechanics, the players can win perfectly whereas classical players cannot.

Note that in general the set of legitimate questions $I$ is a subset of $I_1\times \ldots I_n$. When $I=I_1\times \ldots I_n$ we say that the game is without promise (it is easy to extend promise games to general games: the inputs not in $I$ have no losing condition associated to them). 
 
In the following, we will consider only the case where the inputs and outputs are in $\{0,1\}$.
  
Given a graph $G=(V,E)$, for any subset $D$ of vertices, we define its odd and even neighbourhood as follows: $Even(D)=\{v \in V, |N(v)\cap D|=0 \bmod 2\}$, and $Odd(D)=\{v \in V, |N(v)\cap D|=1 \bmod 2\}$, where $N(v)$ is the neighbourhood of $v$. Thus, $Odd(D)$ ($Even(D)$) is the set of vertices that have an odd (even) number of neighbours in $D$. Notions of odd and even neighbourhood have been very useful in the analysis of secret sharing schemes \cite{secret,secret2} and resource needed for preparation of a graph state \cite{hmp}, and they shall play a crucial role in the family of games we define below.

Before proceeding, note that given a graph $G=(V,E)$, a subset of vertices $D$ satisfies $D\subseteq Even(D)$ if and only if the subgraph induced by $D$ is Eulerian (all vertices have even degree).

Now we define graph games as follows. The players are identified with vertices of the graph. The players lose if and only if there exists a eulerian induced subgraph $D$, for which when 1 is asked to the players in $D$ and 0 to the players in $Odd(D)$, the sum of the answers modulo 2 of the players in $D\cup Odd(D)$ is different from the number of edges of the subgraph induced by $D$.

\begin{defi}\textbf{Graph game $\Gamma_G$:} Given a graph $G=(V,E)$ on $n$ vertices, each player is provided a question $x_i\in\{0,1\}$ and gives an answer $a_i \in \{0,1\}$  and the losing set of a graph game $\Gamma_G$  is defined as follows: $x_1,\ldots x_n, a_1,\ldots a_n$ is in the losing set ${\cal{L}}(\Gamma_G)$ if and only if $\exists D \subseteq V$ such that:
\begin{itemize}
\item  $D\subseteq Even(D)$ 
\item $x_k=1$ if $k\in D$ and 0 if $k\in Odd(D)$
\item  $\sum_{i\in D\cup Odd(D)} a_i= |E(D)| +1 \bmod 2$ where $E(D)$ is the set of edges of the subgraph induced by $D$.
\end{itemize}
\end{defi}
For any $D\subseteq Even(D)$, let $Q(D)$ be the set of questions for which $x_i=1$ if $i\in D$ and $x_i=0$ if $i\in Odd(D)$.

First, we provide a necessary and sufficient condition on a graph $G$, ensuring that if each player uses a classical deterministic strategy then the losing configuration of the game $\Gamma_G$ cannot be avoided. It may be recalled that a bipartite graph is a graph where all the edges are between two disjoint sets of vertices, and that a graph is bipartite if and only if it contains no odd cycle.

\begin{theo} \label{lmcnd} Given $G=(V,E)$, the graph game $\Gamma_G$ cannot be won perfectly by any classical deterministic strategy, if and only if $G$ is not bipartite. 
\end{theo}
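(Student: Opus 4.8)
The plan is to reduce the existence of a perfect classical deterministic strategy to a purely graph-theoretic condition, and then to identify that condition with bipartiteness. A deterministic strategy assigns to each player $i$ a function of its input; since inputs lie in $\{0,1\}$, it is fully specified by two bits $c_i$ and $b_i$, the answers to inputs $0$ and $1$ respectively. The strategy wins perfectly precisely when, for every induced Eulerian $D$ (i.e. $D\subseteq Even(D)$), the players asked the questions in $Q(D)$ answer correctly, that is
\[ \sum_{i \in D} b_i + \sum_{i \in Odd(D)} c_i \equiv |E(D)| \pmod 2. \qquad (\star) \]

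First I would extract the constraints coming from singletons. Each $D=\{v\}$ is trivially Eulerian with $Odd(\{v\})=N(v)$ and $|E(\{v\})|=0$, so $(\star)$ forces $b_v \equiv \sum_{j\in N(v)} c_j \pmod 2$ for every $v$; hence the answers to input $1$ are completely determined by the answers to input $0$. Substituting this into the left-hand side of $(\star)$ for a general induced Eulerian $D$ and swapping the order of summation gives $\sum_{i\in D} b_i \equiv \sum_j c_j\,|N(j)\cap D| \pmod 2$, and since $|N(j)\cap D|$ is odd exactly when $j\in Odd(D)$, this equals $\sum_{j\in Odd(D)} c_j$, which cancels the second term of $(\star)$ modulo $2$. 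Therefore the left-hand side of $(\star)$ vanishes identically for the forced strategy, and a perfect strategy exists if and only if $|E(D)|$ is even for every induced Eulerian subgraph $D$. The key (and slightly surprising) step is this collapse: once the singleton constraints fix the $b_i$, all remaining constraints become independent of the strategy and reduce to a parity condition on edge counts.

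It then remains to show that \emph{every induced Eulerian subgraph has an even number of edges} is equivalent to $G$ being bipartite. For the easy direction, if $G$ is bipartite with parts $A,B$, then for any induced Eulerian $D$ we have $|E(D)| = \sum_{v\in A\cap D} \deg_D(v)$, a sum of even numbers, hence even. For the converse I would argue the contrapositive: if $G$ is not bipartite it contains an odd cycle, and a shortest odd cycle $C$ must be chordless (a chord would split $C$ into a shorter odd cycle and an even cycle, contradicting minimality), so the subgraph induced by $V(C)$ is exactly $C$, an induced Eulerian subgraph with an odd number of edges. This $D$ violates $(\star)$, so no perfect strategy exists.

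The main obstacle I anticipate is bookkeeping rather than conceptual: carefully justifying the summation swap and the parity identity $|N(j)\cap D| \bmod 2 = 1 \iff j\in Odd(D)$ so that the cancellation in $(\star)$ is airtight, and ensuring that the odd cycle produced in the non-bipartite case is genuinely an \emph{induced} subgraph. The latter is precisely why the minimality/chordless argument is needed: an arbitrary odd cycle may carry chords and thus fail to arise as the subgraph induced by a vertex subset $D$, so it would not directly yield a valid losing configuration.
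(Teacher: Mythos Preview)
Your proposal is correct and follows essentially the same approach as the paper: both use the singleton constraints $D=\{v\}$ to determine the answers on input $1$, sum these and cancel against the constraint for a larger induced Eulerian set, and invoke a shortest (hence chordless) odd cycle in the non-bipartite case, while observing that all-zeros works when $G$ is bipartite. Your version is slightly more thorough in that you establish the intermediate equivalence ``a perfect strategy exists iff every induced Eulerian $D$ has $|E(D)|$ even,'' whereas the paper derives the contradiction directly for the specific odd cycle; but the underlying mechanism is identical.
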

\begin{proof}
Suppose $G$  is not bipartite, then it contains an odd cycle. By induction on the size of the smallest odd cycle, it is easy to see that $G$ contains an induced eulerian subgraph $C$.

Let $u_{1},u_{2}...u_{h}$ be vertices in $C$, with $h=|C|$. Suppose there exists a deterministic strategy that never loses for the game $\Gamma_G$, and let $a_{v}^{0}$ be the output of the player $v$ when his input is 0 and  $a_{v}^{1}$ if his input is $1$.

When $D=\{u_{i}\}$, $Odd(D)=N(u_i)$ and $D\subseteq Even(D)$. Thus if the players never lose then: 

\begin{equation}
a_{u_{i}}^{1} + \sum_{j\in N(u_{i})}a_{j}^{0} = 0 \bmod 2
\end{equation}

and when $D=C$

\begin{equation}
\sum_{l=1}^{h}a_{u_{l}}^{1} + \sum_{j\in Odd(U)}a_{j}^{0}=1   \bmod 2      
\end{equation}

Adding  equations 1 for $i$ from $1$ to $h$, we get $\sum_{i=1}^{h}a_{u_{i}}^{1} + \sum_{i=1}^{h}\sum_{j\in N(u_{i})}a_{j}^{0} = 0 \bmod 2$. In second term on left hand side, all $j$ that are in $Even(C)$ are added even number of times. Such $a_{j}^{0}$ thus do not appear in the equation. The result is: $ \sum_{l=1}^{h}a_{u_{l}}^{1} + \sum_{j\in Odd(C)}a_{j}^{0}=0   \bmod 2 $ which contradicts with equation $2$.

On the other hand, if $G$ is bipartite, then it contains no odd cycle. Thus any induced eulerian subgraph has an even number of edges, as it can be decomposed in cycles that are all even. Setting $a_i=0$ for all the players ensure that the game $\Gamma_G$ is won perfectly.
\end{proof}

Theorem \ref{lmcnd} directly implies (see \cite{broadbent}) that even with a probabilistic strategy and with shared random variables classical players have a non-zero probability to lose the game $\Gamma_G$ if $G$ is not bipartite.

However, for any graph game $\Gamma_G$, if the players share the graph state $\ket {G}$, there exists a strategy which ensures that they never lose.
Given a graph $G=(V,E)$, the graph state  $\ket {G}$ (\cite{gs}) is the quantum state that is common eigenvector with associated eigenvalue 1 of the Pauli operators $X_i\Pi_{j\in N(i)} Z_j$  for $i\in V$. 

\begin{theo} 
\label{thq}
There is a strategy that allows quantum players that share the state $\ket{G}$ to never lose in the  game $\Gamma_G$.
\end{theo}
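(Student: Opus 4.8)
The plan is to exhibit the strategy hinted at earlier and to verify it through the stabilizer formalism. Each player $i$ measures the observable $X_i$ when given input $x_i=1$ and $Z_i$ when given input $x_i=0$, and outputs the bit $a_i\in\{0,1\}$ defined by the eigenvalue $(-1)^{a_i}$ of the outcome. Since every qubit is measured in a single basis, the $n$ single-qubit observables pairwise commute, so the protocol realises one joint measurement and the outputs $\{(-1)^{a_i}\}$ form a well-defined simultaneous eigenvalue assignment.

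The heart of the argument is to understand the operator $\prod_{i\in D}S_i$, where $S_i=X_i\prod_{j\in N(i)}Z_j$ is the stabilizer generator and $D\subseteq Even(D)$ is one of the eulerian sets from the losing condition. First I would determine its support qubit by qubit: vertex $i$ carries the factor $X_i^{[i\in D]}Z_i^{|N(i)\cap D|}$. Because $D\subseteq Even(D)$, every $i\in D$ has $|N(i)\cap D|$ even, so the $Z_i$'s cancel and $D$ carries only $X$'s; the surviving $Z$'s sit exactly on $Odd(D)$, which is disjoint from $D$; all other vertices carry the identity. Hence, up to a sign, $\prod_{i\in D}S_i=\pm\,\prod_{i\in D}X_i\prod_{j\in Odd(D)}Z_j$.

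Pinning down that sign is the step I expect to be the main obstacle, since it is precisely what forces the $|E(D)|$ term into the game. I would compute it by normal-ordering the product $S_{i_1}\cdots S_{i_h}$, moving each $X_{i_k}$ to the left past the $Z$-strings of the earlier factors and collecting a factor $-1$ every time it crosses a $Z$ on its own qubit. The factor $S_{i_m}$ with $m<k$ contributes such a crossing exactly when $i_k\in N(i_m)$, i.e.\ when $\{i_m,i_k\}$ is an edge of the subgraph induced by $D$; summing over ordered pairs counts each edge of $D$ once, giving the phase $(-1)^{|E(D)|}$. Thus $\prod_{i\in D}S_i=(-1)^{|E(D)|}\prod_{i\in D}X_i\prod_{j\in Odd(D)}Z_j$.

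To conclude, I would use that $\ket{G}$ is fixed by every generator $S_i$, hence by any product $\prod_{i\in D}S_i$; together with the identity above this yields
\begin{equation}
\prod_{i\in D}X_i\prod_{j\in Odd(D)}Z_j\,\ket{G}=(-1)^{|E(D)|}\ket{G}.
\end{equation}
Suppose now the inputs are consistent with some eulerian $D$, that is $x_i=1$ on $D$ and $x_j=0$ on $Odd(D)$. Then the players in $D$ measure $X$ and those in $Odd(D)$ measure $Z$, so the commuting product above is a product of a subset of the jointly measured observables and its value is $\prod_{i\in D\cup Odd(D)}(-1)^{a_i}$. Since $\ket{G}$ is an eigenvector of this product, that value is deterministically $(-1)^{|E(D)|}$, forcing $\sum_{i\in D\cup Odd(D)}a_i=|E(D)|\bmod 2$. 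This is the negation of the losing clause $\sum_{i\in D\cup Odd(D)}a_i=|E(D)|+1\bmod 2$, so no eulerian $D$ consistent with the inputs can trigger a loss, and the strategy wins perfectly.
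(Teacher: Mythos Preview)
Your proof is correct and follows essentially the same approach as the paper: specify the $X$/$Z$ measurement strategy, identify the stabilizer element $\prod_{i\in D}S_i$, compute its sign as $(-1)^{|E(D)|}$ via the anticommutation of $X$ and $Z$ across edges of the induced subgraph, and conclude the parity constraint on outcomes. Your normal-ordering argument for the sign is in fact more explicit and cleaner than the paper's own sketch.
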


\begin{proof} We show that if the players share the quantum state $\ket{G}$, and for input $1$($0$), they measure their qubit in the diagonal basis (standard basis), then the output completely avoids the losing conditions ${\cal{L(G)}}(\Gamma_G)$ on this game.  

Given a graph state, if  $(-1)^{s}\prod_{i\in V}\sigma_{i}^{h}$ is in the stabilizer of this graph state, where $\sigma_{i}^{h}$ are  Pauli matrices (and $h$ is an index unspecified here, that can take one of the four values corresponding to each pauli matrix), and the qubit $i$ is measured in the $\sigma_{i}^{h}$ basis, then it gives the measurement outcome $m_{i}\in{0,1}$ ($0$ corresponds to projector $(I+\sigma_{i}^{h})/2$ and $1$ to $(I-\sigma_{i}^{h})/2$). These measurement results  satisfy following relation: $\sum_{i} m_{i} = s \bmod 2$. This easily follows from observing that $(I+\sigma_{i}^{h})\sigma_{i}^{h} = I+\sigma_{i}^{h}$ and $(I-\sigma_{i}^{h})\sigma_{i}^{h} = -(I - \sigma_{i}^{h})$ and then applying this to the expression for probability of observing a measurement outcome $(m_1,m_2 \ldots m_{|V|})$ : $\bra{G}\prod_{i}(I+ (2m_i - 1)\sigma_{i}^{h})\ket{G}$. 

Now we show that, given any $D\in V$ such that $D\subseteq Even(D)$, there is an operator (which we construct below) in the stabilizer that corresponds to $s=|E(D)|$. Label the vertices in $V$ with integers $1,2 \ldots |V|$ such that vertices in $D$ are assigned the first $|D|$ integers. Then $\prod_{i\in D}X_{i}Z_{N(i)}$ is in the stabilizer and this is our desired operator. Since $XZX=-X$, every column containing odd number of $Z$ will contribute a $-1$. So total contribution when $-1$ from columns corresponding to $i\in D$ are multiplied will be $(-1)^{|E(D)|}$. Following previous paragraph, this concludes the proof.     
\end{proof}

Note that the quantum strategy for a game $\Gamma_G$ is the same as the one used in \cite{pironio}: when a player gets $1$ as input, he performs an $X$ measurement. When he gets $0$ as input, he performs a $Z$ measurement. The output is the classical measurement outcome.

As an example, we consider the game defined on the complete graph $\Gamma_{K_n}$. The graph state $\ket{K_n}$ is equivalent to the GHZ state on $n$ qubits up to local transformation. 

For any subset $D$ of vertices of $K_n$,  we have $D\subseteq Even(D)$ if and only if  $|D|=1\bmod 2$.  Now, if $|D|=1\bmod 2$ then the number of edges in the subgraph induced by $D$ is  $|E(D)|=|D|(|D|-1)/2 =(|D|-1)/2 \bmod 2$. Furthermore, $|D|$ is number of $1$s in the question. Hence, the losing conditions for $\Gamma_{K_n}$  exist when $ \sum_{j\in V}x_{j}= 1 \bmod 2$ and are of the form:

\begin{equation}
\sum_{i\in V} a_{i} = (\sum_{j\in V}x_{j}-1)/2 +1\bmod 2
\end{equation}  

This is a variation of the Mermin's parity games introduced in \cite{mermin}. These games are very interesting as they have small winning probability with classical strategy\cite{brassard} and their detector efficiency (see \cite{broadbent}) can be as low as $1/2$ and still distinguish between classical and quantum results. The games are defined as follows.

\textbf{Mermin's parity game:} This is a family of games for $n$ players, $n\geq 3$. The task that $n$ players face is the following: Each player $i$ receives as input a bit $x_i\in \left\{0,1\right\}$, which is also interpreted as an integer in binary, with the promise that $\sum_i x_i$ is divisible by $2$. The players must each output a single bit $a_i$ and the winning condition is:

\begin{equation}
\sum_{i=1}^{n} a_i = (\sum_{i=1}^{n}x_i)/2 \bmod 2 
\end{equation} 

$\Gamma_{K_n}$ can be transformed into Mermin's parity game using following local transformations by Player 1: $x_{1}\rightarrow x_{1}+1;a_{1}\rightarrow a_{1}+x_{1}+1$.  This means that given $x_{1}$, Player 1 applies above transformation on this question, plays Mermin's parity game and on the output $a_{1}$ of this game applies above transformation. Same transformation works for transforming Mermin's parity game into $\Gamma_{K_n}$. Thus optimal winning probability is same for both games. 

An important result regarding Mermin's game has been obtained in \cite{methot}. Theorem 9 of the paper shows that $n(n-1)/2$ PR boxes are sufficient for the simulation of $n$-partite Mermin's parity game. This and an another result discussed in \cite{pironio}, which shows that there is a simple probability distribution that cannot be simulated by any number of PR Boxes, motivates us to look at simulability of probability distributions using multipartite nonlocal boxes (PR Box is a bipartite nonlocal box). It forms the subject of next section.   

\section{Genuinely $k$-way NS nonlocality}

Given a graph over $n$ vertices, the quantum strategy that wins the corresponding graph game defined in Theorem \ref{thq} induces a probability distribution $P_{\ket{G}}(a_{1}a_{2}...a_{n}|x_{1}x_{2}...x_{n})$, where $x_{1}x_{2}...x_{n}$ are the questions and $a_{1}a_{2}...a_{n}$ are the answers. In the following, we study the simulability of these probability distributions using nonlocal boxes. We  identify various sufficient conditions on a graph $G$ that allow us to prove that the probability distribution $P_{\ket{G}}$ cannot be simulated by some multipartite nonlocal boxes and we exhibit a graph $G$ on 13 vertices such that $P_{\ket{G}}$ cannot be simulated by $4$-partite nonlocal boxes. The main result of this section is that the family of games defined using Paley graphs on $n$ vertices (which will be defined in next subsection) produces probability distributions that cannot be simulated by $o(\log n)$-partite nonlocal boxes. 

The quantum correlation in \cite{pironio} that cannot be simulated by any number of PR Boxes is same as that induced by quantum strategy for graph game on $C_5$ (cycle with $5$ vertices). We observe that since their argument is strongly based on non-signalling, it can be applied to a general multipartite setting without any prior knowledge of the probability distributions exhibited by multipartite nonlocal boxes. 

We adapt the following measure of nonlocality, called \textit{genuinely $k$-way NS nonlocal} and introduced in \cite{pironiomulti}: 

\begin{defi} A probability distribution is genuinely $k$-way NS nonlocal if it cannot be simulated by a protocol that involves sharing, between the players, nonlocal boxes that are at most $k$-partite. 
\end{defi}

Since the probability distribution obtained by quantum strategy on game $\Gamma_{C_5}$ cannot be simulated by bipartite nonlocal boxes, it is genuinely $2$-way NS nonlocal. In order to apply the argument in \cite{pironio} to obtain genuinely $k$-way NS nonlocal probability distributions, we shall consider a class of graphs defined below.

\begin{defi}
\label{defkod}
A graph $G=(V,E)$ is $k$-odd dominated ($k$-o.d.) if and only if for every subset $S\subseteq V$ with $|S|= k$, there exists a labeling of vertices in $S$ as $v_1,v_2,...v_k$ such that there exist $U_1,U_2...U_k$ satisfying: 
\begin{itemize}
\item $U_i\subseteq V\setminus S$
\item $U_i\subseteq Even(U_i)$
\item $Odd(U_i)\cap \left\{v_i,...v_k\right\}=v_i$
\end{itemize}
\end{defi}

The $k$-o.d. graphs satisfy the following property.

\begin{lemm}\label{kod} Given a graph $G=(V,E)$. If $G$ is $k$-o.d., then $G$ is $j$-o.d. for every $j<k$.
\end{lemm}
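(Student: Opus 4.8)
The plan is to prove the statement directly rather than by contraposition: assuming $G$ is $k$-o.d., I will verify the $j$-o.d. condition for an arbitrary fixed $j<k$. The natural mechanism is a \emph{padding} reduction: any subset on which the $j$-o.d. condition must be checked can be enlarged to a subset of size $k$, where the $k$-o.d. hypothesis hands me a labeling together with the Eulerian sets $U_i$, and I then restrict that data back down to the smaller subset. Since $G$ is $k$-o.d.\ we have $|V|\geq k>j$, so all the set-size comparisons below make sense.

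First I would fix any $S'\subseteq V$ with $|S'|=j$, choose any $T\subseteq V\setminus S'$ of size $k-j$, and set $S=S'\cup T$, so that $|S|=k$. Applying the $k$-o.d.\ hypothesis to $S$ yields a labeling $w_1,\ldots,w_k$ of $S$ and sets $W_1,\ldots,W_k$ with $W_i\subseteq V\setminus S$, $W_i\subseteq Even(W_i)$, and $Odd(W_i)\cap\{w_i,\ldots,w_k\}=\{w_i\}$.

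Next I would read off from this labeling the positions occupied by the vertices of $S'$: let $p_1<p_2<\cdots<p_j$ be the indices with $w_{p_m}\in S'$, and define the induced $S'$-labeling $v_m':=w_{p_m}$ together with $U_m':=W_{p_m}$. The first two required properties are inherited at once, since $U_m'=W_{p_m}\subseteq V\setminus S\subseteq V\setminus S'$ and $U_m'\subseteq Even(U_m')$. The crux is the third (tail) condition. Because the $p_m$ are increasing, the tail $\{v_m',\ldots,v_j'\}$ of the $S'$-labeling is contained in the tail $\{w_{p_m},\ldots,w_k\}$ of the $S$-labeling. Intersecting with $Odd(W_{p_m})$ and using $Odd(W_{p_m})\cap\{w_{p_m},\ldots,w_k\}=\{w_{p_m}\}=\{v_m'\}$, together with $v_m'\in Odd(W_{p_m})$, I obtain $Odd(U_m')\cap\{v_m',\ldots,v_j'\}=\{v_m'\}$, exactly as needed.

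The step I expect to require the most care is precisely this order-preservation argument: the $k$-o.d.\ hypothesis only controls intersections with \emph{tails} of its own labeling, so everything hinges on ordering the vertices of $S'$ to respect the order in which they appear inside $w_1,\ldots,w_k$; an arbitrary reordering of $S'$ could destroy the tail condition. A slightly more modular alternative would be to establish only the single step $k$-o.d.\ $\Rightarrow (k-1)$-o.d.\ (padding by one vertex) and then iterate down to $j$, but the one-shot padding above already handles every $j<k$ directly.
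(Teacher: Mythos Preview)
Your proof is correct and follows essentially the same padding/restriction idea as the paper: enlarge the small subset to one of size $k$, invoke the hypothesis, and restrict the resulting labeling and sets back down (the paper carries out only the single step $k\Rightarrow k-1$ and implicitly iterates, which you yourself flag as an equivalent alternative). Your treatment of the order-preservation needed for the tail condition is in fact more explicit than the paper's.
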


\begin{proof}
 Since $G$ is $k$-o.d., there exists a labeling of vertices in $S_1$ as $v_1,v_2...v_k$, with $v=v_l$ for some $l$ and  there exist $U_1,U_2...U_k \subseteq V\setminus S$ such that $U_i\subseteq Even(U_i)$ and $Odd(U_i)\cap \left\{v_i,...v_k\right\}=v_i$. Then a labeling on vertices in $S'$ can simply be defined to be $v_1...v_{l-1}v_{l+1}...v_k$. It is also easy to verify that $Odd(U_{i})\cap \left\{v_i,...v_{l-1}v_{l+1}...v_{k}\right\}=v_i$. This is true for every subset of size $k-1$. Hence $G$ is $(k-1)$-o.d.  
\end{proof}

To show that $P_{\ket{G}}$ is genuinely $k$-way NS nonlocal if graph $G$ is $k$-o.d, we follow the lines of the proof in \cite{pironio} for the non-simulability of $P_{\ket{C_5}}$ with PR boxes. Before proceeding, note that non-signalling allows to define the behavior of a nonlocal box when some of the players sharing it do not use it. 
Given a non-signalling probability distribution $P(a_1,a_2\ldots a_n|x_1,x_2\ldots x_n)$, $P(a_{l+1},a_{l+2}\ldots a_n|x_1,x_2\ldots x_n)=\sum_{a_1,a_2\ldots a_l} P(a_1,a_2\ldots a_n|x_1,x_2\ldots x_n) = P(a_{l+1},a_{l+2}\ldots a_n|x_{l+1},x_{l+2}\ldots x_n)$. Last term is the probability distribution over remaining players and is independent of input of first $l$ players. Thus remaining players have no information about whether the first $l$ players use the box or not.

General form of any protocol that uses nonlocal boxes and shared randomness among players to simulate a probability distribution can be given in following way \cite{pironio}. Let the nonlocal boxes that a particular player, for example $v_1$, shares with other players be labelled as $NL_1,NL_2\ldots NL_m$. Let his shared random variables be collectively represented by $\lambda$. Given an input $x_1$, $v_1$ puts a bit $y_1(x_1,\lambda)$ into $NL_1$. He gets an output $\alpha_1$ from the box. In $NL_2$, he inputs $y_2(\alpha_1,x_1,\lambda)$ and obtains $\alpha_2$. Continuing this way, he finally outputs $a_1(\lambda,\alpha_1,\alpha_2\ldots \alpha_m,x_1)$. Similar protocol is followed by every player.

\begin{lemm}
\label{nsgen}
Given a $k$-o.d. graph $G$, if the probability distribution $P_{\ket{G}}$ can be simulated by a protocol that uses a set of $j$-partite nonlocal boxes with $j\le k$ and shared randomness, then $P_{\ket{G}}$ can be simulated by an equivalent protocol in which any player who gets input $0$ does not use nonlocal boxes. 
\end{lemm}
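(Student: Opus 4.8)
The plan is to put the simulating protocol into the claimed normal form by repeatedly exploiting two facts: (i) since the protocol reproduces $P_{\ket{G}}$, which by Theorem \ref{thq} wins $\Gamma_G$ perfectly, on every question set $Q(D)$ with $D\subseteq Even(D)$ the outputs obey, with probability $1$, the deterministic parity relation $\sum_{i\in D\cup Odd(D)}a_i=|E(D)|\bmod 2$; and (ii) strong non-signalling, which lets us speak of the behaviour of a box when some of its players abstain and which guarantees that the marginal law of any group of players is independent of the inputs of the remaining players. Fact (i) is what ties the otherwise free outputs of the input-$0$ players to a rigid linear structure, and fact (ii) is what will let us declare that an input-$0$ player stops feeding its boxes without changing the marginals seen by everyone else.

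First I would reduce the statement to a single group of players that jointly share one box. Because every box is at most $k$-partite, any such group $S$ has $|S|\le k$, and by Lemma \ref{kod} the graph is $|S|$-odd dominated, so Definition \ref{defkod} supplies a labelling $v_1,\dots,v_{|S|}$ of $S$ together with sets $U_1,\dots,U_{|S|}\subseteq V\setminus S$ satisfying $U_i\subseteq Even(U_i)$ and $Odd(U_i)\cap\{v_i,\dots,v_{|S|}\}=\{v_i\}$. The point of this triangular condition is that on the question $Q(U_i)$ the players $v_{i+1},\dots,v_{|S|}$ lie neither in $U_i$ nor in $Odd(U_i)$, so the winning relation attached to $U_i$ involves $a_{v_i}$ together with outputs of players in $(U_i\cup Odd(U_i))\setminus\{v_i\}$ — all of which sit outside $S$ except possibly the earlier-labelled members among $\{v_1,\dots,v_{i-1}\}$. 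Reading these $|S|$ relations in the order $i=1,2,\dots$ therefore expresses the input-$0$ output of each $v_i$ by forward substitution in terms of outputs of players outside $S$ and of the earlier $v_j$, with no circular dependence.

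The crucial step is then to argue that each such expression collapses to a function of the shared randomness $\lambda$ alone, so that $v_i$ need not read its box. Here I would use non-signalling twice: the joint output law of the group $S$, for the fixed inputs prescribed by the $Q(U_i)$, is independent of the inputs of all players outside $S$, and the single box shared inside $S$ cannot transmit to $v_i$ any information about the inputs of the other members. Hence the right-hand side of the forward substitution, once conditioned on $\lambda$, cannot genuinely depend on data that would require signalling into $v_i$; what remains is a value fixed by $\lambda$, which we fold into the shared randomness and output directly. Performing this replacement for every box and iterating over all groups yields a protocol in which no input-$0$ player touches a box, while fact (ii) guarantees that the marginals of the input-$1$ players — and hence the whole distribution $P_{\ket{G}}$ — are untouched.

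I expect the main obstacle to be precisely this collapse-to-$\lambda$ argument: one must check that the deterministic relations coming from the various $U_i$ are mutually consistent, that the value assigned to $v_i$ on input $0$ is well defined independently of which winning condition is invoked, and — most delicately — that redefining the input-$0$ outputs reproduces the correct joint statistics for all input patterns, not merely for the special questions $Q(U_i)$ used to derive the relations. Controlling the interaction between a player's several boxes (the sequential use described before the lemma) and making the elimination order global rather than per-group is the part that will require the most care; the non-signalling marginalisation identity quoted before the lemma is the tool I would lean on to make each local replacement invisible to the rest of the protocol.
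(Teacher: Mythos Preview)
Your proposal has a genuine gap at precisely the point you flag as the main obstacle: the ``collapse to $\lambda$'' step does not go through as you describe it. The parity relation attached to $U_i$ expresses $a_{v_i}$ (on input $0$) as $|E(U_i)|+\sum_{v\in (U_i\cup Odd(U_i))\setminus\{v_i\}}a_v$, but the players in $U_i$ receive input $1$ and their outputs depend on \emph{their} box outcomes, not only on $\lambda$. So the right-hand side is not a function of $\lambda$, and no amount of non-signalling into $v_i$ changes that. Your two invocations of non-signalling only say that certain \emph{marginals} do not depend on certain \emph{inputs}; they do not force any output to become deterministic given $\lambda$. Consequently the forward-substitution scheme, even with the triangular structure of the $U_i$, never terminates in a $\lambda$-only expression.

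The paper's argument is organised quite differently and avoids this trap. It fixes one player with input $0$, orders \emph{that player's} boxes as $NL_1,\dots,NL_m$, and peels them off from the last one backward. For the current last box $NL_m$, shared by a set $S$, it uses only the single set $U_1$ with $Odd(U_1)\cap S=\{v_1\}$, so that the parity constraint on $Q(U_1)$ correlates $a_{v_1}$ perfectly with outputs of players \emph{entirely outside} $S$. Non-signalling is then used not to make anything deterministic, but to argue that this correlation persists in the scenario where the other members of $S$ abstain from $NL_m$; in that scenario the bit $v_1$ receives from $NL_m$ is independent of everything else in the protocol, and a bit independent of the rest cannot be used to satisfy a probability-$1$ parity with the rest. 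Hence $a_{v_1}$ does not depend on the output of $NL_m$, that box can be dropped, and the induction continues with $NL_{m-1}$. The full triangular family $U_1,\dots,U_{|S|}$ is never needed, and no attempt is made to rewrite $a_{v_i}$ as an explicit function of $\lambda$; the conclusion that it becomes one is the \emph{end} of the induction, not an intermediate step.
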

\begin{proof}
Consider a protocol $\Pi$ and suppose that player $v_1$ gets input $x_1=0$. He uses a series of $m$ nonlocal boxes $NL_1,NL_2\ldots NL_m$ in this order, where input in any box is a function of outputs from previous boxes and shared random variables. Suppose $NL_m$ is shared by $j$ players, where $j\leq k$. Without loss of generality, we assume that these players are $v_1,v_2\ldots v_j$ and call this set $S$. Since graph is $j$-o.d (by lemma \ref{kod}), there exists a $U_1$ which satisfies $U_1\subseteq V\setminus S$, $U_1\subseteq Even(U_1)$ and $Odd(U_1)\cap \left\{v_1,...v_j\right\}=v_1$. 

For any question in $Q(U_1)$, we have $x_1=0$ and $\sum_{v\in U_1\cup Odd(U_1)}a_{v}=|E(U_1)| \bmod 2$. This condition for correlation on outputs of the players involves only vertices in $U_1\cup Odd(U_1)$  and does not involve any player in $S$ other than $v_1$. By non-signalling, even if the other players in $S$ do not use the box, the correlation does not change among players in $U_1\cup Odd(U_1)$. But if other pllayers do not use $NL_m$, the classical bit $v_1$ gets from $NL_m$ is uncorrelated with rest of the protocol. As a result, the final output of $v_1$ in $\Pi$ cannot depend on output he obtains from $NL_m$.  But if $v_1$ does not use the output of $NL_m$, he can very well terminate his protocol without inputting anything into $NL_m$.  

By same argument, if $v_1$ does not use $NL_m$, then he can terminate his protocol before using $NL_{m-1}$. Continuing this way, $v_1$ does not use any of the boxes if he gets $0$ as input. This argument holds for every player, proving the lemma.
\end{proof}

We thus have the following:

\begin{lemm} \label{odnl} Given a graph $G=(V,E)$, if $G$ is $k$-o.d. then $P_{\ket{G}}$ is genuinely $k$-way NS nonlocal.
\end{lemm}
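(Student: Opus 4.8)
The plan is to argue by contradiction, using Lemma~\ref{nsgen} to strip away the nonlocal boxes and then importing the parity argument of Theorem~\ref{lmcnd}. Suppose $P_{\ket{G}}$ were simulable by a protocol using at most $k$-partite boxes together with shared randomness $\lambda$. Since $G$ is $k$-o.d., Lemma~\ref{nsgen} lets me assume that every player receiving input $0$ uses no box at all; consequently such a player's answer is a fixed function $a_v^0(\lambda)$ of the shared randomness alone, the same in every global input configuration. I would then fix $\lambda$ once and work with the conditional behaviour, which, being a composition of non-signalling boxes with local post-processing, is itself non-signalling.

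The heart of the argument is to show that the boxes cannot actually help, i.e.\ that conditioned on $\lambda$ each player's answer is a deterministic function of its own input. First I would look, for an arbitrary vertex $u$, at the configuration in $Q(\{u\})$ in which only $u$ is asked $1$: here every neighbour of $u$ is asked $0$ and hence answers deterministically, so the $D=\{u\}$ constraint (with $|E(\{u\})|=0$), namely $a_u + \sum_{j\in N(u)} a_j^0(\lambda) = 0 \bmod 2$, must hold with probability $1$. As the second summand is frozen by $\lambda$, the answer of $u$ is forced to the single value $c_u(\lambda) = \sum_{j\in N(u)} a_j^0(\lambda)$. The key point is that, by non-signalling, the one-party marginal on $u$'s answer depends only on $u$'s own input, so this same deterministic value $c_u(\lambda)$ is produced by $u$ on input $1$ in \emph{every} configuration --- even one in which $u$ shares active boxes with other players who are also asked $1$. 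This is the step I expect to be the main obstacle, because a priori box-sharing among input-$1$ players could redistribute their answers; the resolution is precisely that non-signalling pins the individual marginal while the single-vertex constraint fixes it to a point mass.

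With this in hand the contradiction is immediate and mirrors Theorem~\ref{lmcnd}. As there, non-bipartiteness gives an induced eulerian subgraph $C=\{u_1,\dots,u_h\}$ with $|E(C)|$ odd. In the configuration $Q(C)$ the winning condition reads $\sum_{l} c_{u_l}(\lambda) + \sum_{j\in Odd(C)} a_j^0(\lambda) = |E(C)| \bmod 2$, and substituting $c_{u_l}(\lambda)=\sum_{j\in N(u_l)} a_j^0(\lambda)$ collapses the left-hand side: every $a_j^0$ with $j\in Even(C)$ is counted an even number of times and every $a_j^0$ with $j\in Odd(C)$ is counted once, so the left-hand side equals $0 \bmod 2$ while the right-hand side is odd. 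This $0=1$ contradiction shows no such simulation exists, hence $P_{\ket{G}}$ is genuinely $k$-way NS nonlocal. I would flag that the final step uses an odd-edged eulerian subgraph, i.e.\ that $G$ is not bipartite; one should check that the $k$-o.d.\ hypothesis (for the relevant values of $k$) indeed guarantees this, or add non-bipartiteness as an explicit assumption.
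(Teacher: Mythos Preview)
Your argument is correct and follows essentially the paper's route: apply Lemma~\ref{nsgen}, then use the singleton questions $Q(\{u\})$ to force each player's input-$1$ answer to be deterministic given the shared randomness (the paper phrases this step as ``$v_1$ does not use the output of nonlocal boxes when given input $1$'', which is the same observation), and conclude by the impossibility of a local simulation. The paper compresses that last step into a single ``This is impossible'', which is exactly the Theorem~\ref{lmcnd} parity contradiction you spell out; your flag is well taken, since that step does require $G$ to be non-bipartite, an assumption the paper leaves implicit rather than deducing from $k$-o.d.
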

\begin{proof}
Suppose a protocol $\Pi$ simulates $P_{\ket{G}}$ using nonlocal boxes that are at most $k$-partite. By Lemma \ref{nsgen}, a new protocol can be constructed in which a player, who gets input $0$, does not use any of the nonlocal boxes available to him. Now chose a player $v_1$ (without loss of generality) and consider a question belonging to $Q({v_1})$. For such a question, $v_1$ is input $1$ and rest players input $0$. No player other than $v_1$ uses the nonlocal boxes. As a result, the nonlocal boxes shared with $v_1$ act as an uncorrelated random variable for $v_1$. Thus $v_1$ does not use the output of nonlocal boxes, when given the input $1$. This holds for every player and hence $P_{\ket{G}}$ can be simulated by a protocol which involves only shared random variables. This is impossible, leading to a contradiction. So probability distribution is genuinely $k$-way NS nonlocal.  
\end{proof}   

We give below a sufficient condition under which a graph is $k$-o.d. This condition shall be useful in study of games over some graphs below (even though it is more restrictive, it is easier to check compared to $k$-o.d. property).  

\begin{lemm}\label{strcor} Given a graph $G=(V,E)$, If for every $T \subseteq V$ of size $j\leq k$ there exist disjoint subsets $W_1,W_2...W_{k-j+1}$ in $V\setminus T$ satisfying $W_i\subseteq Even(W_i)$ and $|Odd(W_i)\cap T|=1$, then $G$ is $k$-o.d.  
\end{lemm}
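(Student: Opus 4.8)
The plan is to construct the required labeling $v_1,\ldots,v_k$ of $S$ together with the sets $U_1,\ldots,U_k$ by a single left-to-right sweep $i=1,2,\ldots,k$, maintaining the invariant that before step $i$ the vertices $v_1,\ldots,v_{i-1}$ have been labelled and the sets $U_1,\ldots,U_{i-1}$ fixed. At step $i$ I would apply the hypothesis to the set $T_i:=S\setminus\{v_1,\ldots,v_{i-1}\}$ of the still-unlabelled vertices, which has size $j=k-i+1\le k$. This produces $k-j+1=i$ pairwise disjoint subsets $W_1,\ldots,W_i\subseteq V\setminus T_i$, each satisfying $W_l\subseteq Even(W_l)$ and $|Odd(W_l)\cap T_i|=1$, i.e.\ each is an Eulerian induced subgraph odd-dominating exactly one vertex of $T_i$.

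The one thing these $W_l$ do not automatically give me is membership in $V\setminus S$: the hypothesis only guarantees $W_l\subseteq V\setminus T_i$, so a $W_l$ may still meet the already-labelled vertices $v_1,\ldots,v_{i-1}$. This is where the disjointness in the hypothesis does the real work. Since the $i$ sets $W_1,\ldots,W_i$ are pairwise disjoint, each of the $i-1$ vertices $v_1,\ldots,v_{i-1}$ lies in at most one of them, so at most $i-1$ of the sets are ``spoiled''. Hence at least one set, say $W^\ast$, avoids $\{v_1,\ldots,v_{i-1}\}$ entirely. Because $W^\ast\subseteq V\setminus T_i$ as well, we get $W^\ast\subseteq V\setminus(T_i\cup\{v_1,\ldots,v_{i-1}\})=V\setminus S$. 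I would then set $U_i:=W^\ast$ and label $v_i$ to be the unique vertex of $Odd(W^\ast)\cap T_i$; this choice keeps $v_i$ distinct from the earlier labels and updates the invariant to $T_{i+1}=T_i\setminus\{v_i\}$.

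It remains only to read off the three defining properties of $k$-o.d.\ from this construction. By the previous paragraph $U_i\subseteq V\setminus S$, and $U_i=W^\ast\subseteq Even(W^\ast)=Even(U_i)$ holds by the hypothesis. Finally, since at step $i$ the unlabelled set is exactly $T_i=\{v_i,\ldots,v_k\}$ and $Odd(U_i)\cap T_i=\{v_i\}$ by the choice of $W^\ast$, we obtain $Odd(U_i)\cap\{v_i,\ldots,v_k\}=\{v_i\}$, as required; any odd-domination of the already-labelled $v_1,\ldots,v_{i-1}$ is harmless since those vertices carry no constraint.

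The genuine obstacle, and the only place the argument could fail, is precisely the passage from ``$W_l$ avoids $T_i$'' to ``$U_i$ avoids all of $S$.'' The hypothesis is tailored to overcome exactly this: the number $k-j+1$ of sets it supplies is one more than the number $i-1=k-j$ of already-labelled vertices to be dodged, so the pigeonhole step succeeds at every stage, from $i=1$ (where a single set trivially works) up to $i=k$ (where $k$ disjoint sets must avoid $k-1$ vertices). I expect writing out this calibration of the count against the disjointness to be the main point to get right; everything else is bookkeeping on $Odd$ and $Even$.
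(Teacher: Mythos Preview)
Your proof is correct and follows essentially the same approach as the paper: both proceed by labeling $v_1,\ldots,v_k$ one at a time, at step $i$ applying the hypothesis to the remaining unlabelled vertices $T_i=S\setminus\{v_1,\ldots,v_{i-1}\}$ (of size $k-i+1$), and using the pairwise disjointness of the $i$ resulting sets to select one avoiding the $i-1$ already-labelled vertices. Your writeup is somewhat more explicit about the pigeonhole count than the paper's, but the argument is the same.
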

\begin{proof} For every subset $S\subseteq V$ of size $k$, it is sufficient to construct the labels $v_1,v_2...v_k$ and subsets $U_1,U_2...U_k$ as stated in Definition \ref{defkod}. Given such a $S$, consider the case when $T=S$. By the condition in the lemma, there exists a $W_1 \subseteq V\setminus T$ that satisfies $|Odd(W_1)\cap T|=1$ and $W_1\subseteq Even(W_1)$. Label the vertex in $S$ adjacent to $W_1$ as $v_1$ and chose $U_1$ to be the subset $W_1$. Now, consider the case when $T=S\setminus v_1$. There exist two disjoint subsets $W_2,W'_2$ of $V\setminus T$ that satisfy $|Odd(W_2)\cap T|=|Odd(W'_2)\cap T|=1$ and $W_2\subseteq Even(W_2),W'_2\subseteq Even(W'_2)$. If $W_2$ contains $v_1$, then label the vertex in $S$ adjacent to $W'_2$ as $v_2$ and chose $U_2$ to be the subset $W'_2$. Else label the vertex adjacent to $W_2$ as $v_2$ and chose $U_2$ to be the subset $W_2$. Continuing this way, the vertices in $S$ can be labelled as $v_1,v_2...v_k$ and the corresponding sets $U_1,U_2...U_k$ can be constructed. Hence $G$ is $k$-o.d.   
\end{proof}

\subsection{Example of $3-$o.d. and $4-$o.d. graphs}

The Paley graphs on $n$ vertices ($Pal_n$) are defined when $n$ is a prime power and satisfies $n=1\bmod 4$. Vertices are labelled with numbers $0,1...(n-1)$ and two vertices $a,b$ are adjacent if and only if  $a-b=m^2 \bmod n$ for some $m$. These graphs have recently been investigated for their interesting properties under local complementation \cite{local}. They are up to our knowledge the only known family of graphs with minimal degree by local complementation larger than the square root of their order, which implies by \cite{hmp} that to prepare the Paley graph states by only measurement, one needs $\sqrt{n}$ qubits measurements. They are also the best known family of graphs for secret sharing with graph states and thus for some quantum codes \cite{secret,secret2}.

Furthermore, these graphs have interesting properties such as: they are strongly regular ($Srg(n,(n-1)/2,(n-5)/4,(n-1)/4)$, see for example \cite{bonato}), self-complementary, edge transitive (any edge is same as any other edge, up to a relabeling of vertices) and vertex transitive (any vertex is equivalent to any other vertex, up to a relabeling of vertices).  

Using Lemma \ref{strcor} we show that every Paley graph having more than $5$ vertices is $3$-o.d. and that the probability distribution obtained by the game on $\ket{Pal_{13}}$ is genuinely $4$-way NS nonlocal.

\begin{lemm} \label{3od}
 For every $n\ge 9$, $Pal_n$ is $3-$o.d.
\end{lemm}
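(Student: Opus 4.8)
The plan is to establish the hypothesis of Lemma~\ref{strcor} for $k=3$, from which the $3$-o.d.\ property follows immediately. Thus for every $T\subseteq V$ with $|T|=j\le 3$ I must produce $4-j$ pairwise disjoint sets $W_1,\dots,W_{4-j}\subseteq V\setminus T$, each inducing an Eulerian subgraph ($W_i\subseteq Even(W_i)$) and with $|Odd(W_i)\cap T|=1$. The gadgets I would use are the cheapest Eulerian sets: an isolated vertex $\{w\}$, for which $Odd(\{w\})=N(w)$; a non-adjacent pair $\{a,b\}$, for which $Odd(\{a,b\})=N(a)\mathbin{\triangle}N(b)$; and a triangle, which is plentiful since $\lambda=(n-5)/4\ge 1$ for $n\ge 9$. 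Since $Pal_n$ is vertex- and edge-transitive, it suffices in each case to treat one representative of each orbit of $T$.

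The cases $j=1$ and $j=2$ are routine. For $T=\{t\}$ I take three distinct singletons $\{w_1\},\{w_2\},\{w_3\}$ with $w_i\in N(t)$; these are disjoint because $\deg(t)=(n-1)/2\ge 4$, and $Odd(\{w_i\})\cap T=\{t\}$. For $T=\{t_1,t_2\}$ a singleton $\{w\}$ works whenever $w$ is adjacent to exactly one of $t_1,t_2$; counting such vertices from the strongly regular parameters (each $t_i$ has $(n-1)/2$ neighbours and they share $\lambda$ or $\mu$ of them) gives $\Theta(n)$ of them, in particular at least two disjoint choices for every $n\ge 9$.

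All the difficulty is in $j=3$, $T=\{t_1,t_2,t_3\}$, where a single Eulerian $W$ with $|Odd(W)\cap T|=1$ is needed; this is the step I expect to be the main obstacle. The first attempt is a singleton: a vertex $w\notin T$ adjacent to an odd number of the $t_i$. The number of such $w$ equals $\tfrac12\big(n-\sum_{w\in V}(-1)^{|N(w)\cap T|}\big)$ up to the $O(1)$ correction for $w\in T$. Expanding $(-1)^{|N(w)\cap T|}=\prod_{i}(1-2A_{w,t_i})$ and summing, the constant, degree and common-neighbour contributions cancel to leading order, and the non-principal eigenvalues $\tfrac{-1\pm\sqrt n}{2}$ of $Pal_n$ bound the remainder, so the signed sum is $O(\sqrt n)$ and the count is $n/2+O(\sqrt n)$. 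Hence for all sufficiently large $n$ a singleton gadget always exists, settling every $T$ at once; the finitely many smallest orders are to be handled separately.

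The genuinely delicate regime is $j=3$ with $T$ a coclique in the smallest admissible Paley graphs, where the signed count above is not a priori bounded away from zero and a singleton may be unavailable. There I would drop singletons and search for a tailored Eulerian gadget: a non-adjacent pair $\{a,b\}$ whose neighbourhoods isolate a single $t_i$, i.e.\ $|(N(a)\mathbin{\triangle}N(b))\cap T|=1$, or a short induced even cycle disjoint from $T$ meeting it oddly at exactly one vertex. Producing such a gadget for every coclique triple of the base order $n=9$ (where $Srg(9,4,1,2)$ forces each vertex outside a maximum coclique to a fixed adjacency pattern with respect to $T$), and then invoking the eigenvalue count for all larger $n$, is what completes the argument; this base construction is the crux on which the whole statement for $n\ge 9$ rests.
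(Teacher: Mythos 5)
Your plan follows the paper's skeleton (reduce to Lemma~\ref{strcor}, use singleton gadgets for $j\le 2$), and your character-sum estimate for $j=3$ is fine in spirit for large $n$, modulo one correction: a singleton gadget $\{w\}$ needs $|N(w)\cap T|=1$ exactly, not merely odd, so the relevant count is the ``adjacent to exactly one'' count ($3n/8+O(\sqrt n)$), not your odd-parity count ($n/2+O(\sqrt n)$); both are positive for large $n$, so that is repairable. The fatal problem is the step you explicitly defer and call the crux: the base case $n=9$ with $T$ a maximum coclique. That case is not merely delicate, it is impossible, because $Pal_9$ is not $3$-o.d.\ at all. Indeed $Pal_9$ is the $3\times 3$ rook's graph: identify the vertices with $(i,j)\in\mathbb{Z}_3^2$, two vertices being adjacent iff they agree in exactly one coordinate, and take $T=\{(0,0),(1,1),(2,2)\}$. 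Every vertex outside $T$ agrees with exactly one element of $T$ in its first coordinate and with exactly one other in its second, so it has exactly two neighbours in $T$. Hence for every $W\subseteq V\setminus T$, double counting gives $\sum_{t\in T}|N(t)\cap W|=\sum_{w\in W}|N(w)\cap T|=2|W|$, which is even, so $|Odd(W)\cap T|$ is always even and can never equal $1$. Thus no gadget of any kind (singleton, non-adjacent pair, cycle, or any other Eulerian set) exists for this $T$; Definition~\ref{defkod} fails for $S=T$ because no admissible $U_1$ exists, and Lemma~\ref{3od} is false at $n=9$.

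So your instinct about where the difficulty lies was exactly right, but it cannot be overcome: at best your method proves the statement for $n$ large (via Weil-type bounds), leaving the orders $13,17,25,\dots$ up to your threshold to finite verification. For comparison, the paper's own proof stumbles at the same place: its count of vertices adjacent to exactly one of two given vertices, $(n-1)/2+(n-1)/2-2-(n-5)/4=3(n-1)/4-1$, subtracts the common neighbourhood once (which counts the union) instead of twice (the symmetric difference); the correct count is $(n-1)/2$, and then discarding the at most $(n-1)/2$ neighbours of $v_3$ leaves a bound of $0$, i.e.\ nothing. The $Pal_9$ example shows this is not a mere bookkeeping slip that a finer count could fix: the conclusion itself fails there, so the lemma (and its later use for $Pal_{13}$, where the cases $j\le 3$ are delegated to it) would have to be restated for $n\ge 13$ and re-proved, e.g.\ by your large-$n$ estimate together with explicit verification of the remaining small orders.
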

\begin{proof}
 $Pal_n$ is strongly regular: every vertex has $(n-1)/2$ neighbours, any two adjacent vertices have $(n-5)/4$ vertices in their common neighbourhood and any two non-adjacent vertices have $(n-1)/4$ vertices in their common neighbourhood. Then given any two vertices, there are at least $(n-1)/2 + (n-1)/2 - 2 - (n-5)/4  = 3(n-1)/4 -1$ vertices in remaining graph adjacent to exactly one of these vertices. For any subset $S=\{v_1,v_2,v_3\}$, let $J$ be the set vertices adjacent to exactly one of $v_1,v_2$. Number of vertices in $J$ are at least $3(n-1)/4 - 1$.  If $v_3$ is not in $J$, then number of vertices in $J$ adjacent to $v_3$ is at most $(n-1)/2$. Hence, there are at least $(n-1)/4 -1$ vertices adjacent to exactly one of the vertices in $S$. If $v_3$ is in $J$, then number of vertices in $J \setminus v_3$ is at least $3*(n-1)/4 -2$. But the number of neighbours of $v_3$ in $J$ is at most $(n-1)/2-1$. Hence, there are at least $(n-1)/4 -1$ vertices adjacent to exactly one of the vertices in $S$, which is greater than $1$ for $n\ge 9$. Proof follows from Lemma \ref{strcor}, with each disjoint subset $W_i$ being a vertex. 
 \end{proof}

 Note that the argument in above lemma cannot be extended to subsets of size $4$.
 
 \begin{theo}  
The probability distribution   $P_{\ket{Pal_{13}}}$ (Figure 1) is genuinely $4$-way NS nonlocal.  
\end{theo}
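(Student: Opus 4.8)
The plan is to show that $Pal_{13}$ is $4$-o.d. and then invoke Lemma \ref{odnl} to conclude that $P_{\ket{Pal_{13}}}$ is genuinely $4$-way NS nonlocal. This reduces the probabilistic/non-signalling question entirely to a purely combinatorial property of the single graph $Pal_{13}$. The remark immediately preceding the statement is the crucial warning: the averaging/counting argument of Lemma \ref{3od} exploits strong regularity but only yields $(n-1)/4-1$ candidate vertices adjacent to exactly one vertex of $S$, which for $n=13$ gives only $2$ — enough for $3$-o.d. but not obviously enough to produce the $k-j+1$ \emph{disjoint} witness sets that Lemma \ref{strcor} demands when $k=4$. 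So I do not expect the clean counting bound to suffice, and the proof will have to be more hands-on.

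First I would try to verify the sufficient condition of Lemma \ref{strcor} directly for $k=4$: for every $T\subseteq V$ with $|T|=j\le 4$, exhibit $4-j+1$ pairwise-disjoint subsets $W_1,\dots,W_{5-j}$ in $V\setminus T$, each inducing an Eulerian subgraph ($W_i\subseteq Even(W_i)$) and each with $|Odd(W_i)\cap T|=1$. Using the vertex-transitivity and edge-transitivity of $Pal_{13}$, the number of genuinely distinct cases for $T$ collapses dramatically: for $|T|=1$ every vertex looks alike, for $|T|=2$ only the adjacent/non-adjacent distinction matters, and for $|T|=3,4$ one classifies $T$ by its induced-subgraph isomorphism type together with its orbit under the automorphism group. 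For each such type I would search for the required disjoint Eulerian witnesses — and here, unlike in Lemma \ref{3od}, I would allow the $W_i$ to be larger Eulerian sets (triangles, or other even-degree induced subgraphs) rather than single vertices, since single-vertex witnesses are exactly what run short for $n=13$.

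\emph{The hard part} will be the case $|T|=4$, where Lemma \ref{strcor} requires just one set $W_1$ but the earlier labeling construction (Lemma \ref{strcor}'s proof) then peels off vertices and demands progressively more disjoint witnesses as $T$ shrinks; the genuinely binding constraint is really the $|T|=2$ case, which needs three disjoint Eulerian sets each isolating a prescribed vertex, and the $|T|=1$ case needing four. With only $13$ vertices and each $W_i$ consuming several of them, disjointness is tight, so I expect to rely on explicit constructions using the quadratic-residue structure (the neighbours of $0$ are $\{1,3,4,9,10,12\}$) and to shift them around the additive group $\mathbb{Z}_{13}$ by vertex-transitivity. In practice I would either produce an explicit table of witness sets for each orbit representative of $T$, or — since $Pal_{13}$ is small and its probability distribution is already displayed in Figure 1 — fall back on a direct finite check of Definition \ref{defkod} over all $\binom{13}{4}$ subsets $S$, confirming for each the existence of a valid labeling and sets $U_1,\dots,U_4$. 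Either route establishes that $Pal_{13}$ is $4$-o.d.

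Finally, once $4$-o.d.\ is established, the theorem is immediate: by Lemma \ref{odnl}, a $k$-o.d.\ graph gives a genuinely $k$-way NS nonlocal distribution $P_{\ket{G}}$, so taking $G=Pal_{13}$ and $k=4$ yields that $P_{\ket{Pal_{13}}}$ cannot be simulated by any collection of at most $4$-partite nonlocal boxes, which is exactly the claim. The only real content, then, is the combinatorial verification of the $4$-o.d.\ property, and the main obstacle is making the disjoint-witness construction work under the severe size constraint that foils the naive strongly-regular counting argument.
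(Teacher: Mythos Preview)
Your plan is exactly the paper's: verify the hypothesis of Lemma~\ref{strcor} for $k=4$ by a case analysis over orbit representatives of $T$ (using edge-transitivity and self-complementarity), allow non-singleton Eulerian witnesses where needed, and then conclude via Lemma~\ref{odnl}. One correction to your intuition, though: the cases $|T|\le 3$ are \emph{not} the binding ones---the counting inside the proof of Lemma~\ref{3od} already supplies enough pairwise-disjoint \emph{singleton} witnesses (at least $2$ for $|T|=3$, at least $3(n-1)/4-1=8$ for $|T|=2$, and the $(n-1)/2=6$ neighbours for $|T|=1$), so the paper dispatches $j=1,2,3$ in one line and your worry about disjointness of multi-vertex $W_i$ there is unfounded. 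The only genuine work is $|T|=4$: the paper classifies the $4$-subsets by the number of induced edges, finds a single-vertex witness in almost every orbit, and resorts to $|U|\in\{2,3\}$ only for the three representatives with exactly one induced edge.
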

\begin{figure}
	\centering
		\includegraphics[scale=0.7]{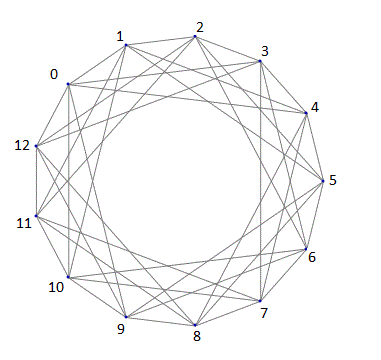}
		\caption{Paley graph on 13 vertices}
	\label{fig:figure 1}
\end{figure}
\begin{proof}

Using Lemma \ref{strcor}, it is sufficient to prove that for every $j\leq 4$, if for every subset $S'\subseteq V$ of size $j$ there exist $5-j$ disjoint subsets $U_1,U_2...U_{5-j}$ in $V\setminus S'$ each of which satisfy $|Odd(U_i)\cap S'|=1$ and $U_i\subseteq Even(U_i)$, then $G$ is $4$-o.d.

By Lemma \ref{3od}, Lemma \ref{strcor} is satisfied for $j=1,2,3$ with $U_i$ of size $1$. 

For $j=4$, we verify it case by case below, whence $U_i$ may not be subsets of size $1$ in some cases, but they satisfy $U_i\subseteq Even(U_i)$. 

By edge transitivity and self-complementarity, we can always select $1,2$ or $1,3$ as two elements of set $S$. Furthermore, a subgraph isomorphic to $K_{4}$ (complete graph on 4 vertices) does not exist in $Pal_{13}$. So there are no set of four vertices with no edge between them, as the graph is self-complementary \cite{sachs}. Our cases are as follows, classified on the basis of number of edges. 

\begin{enumerate}

\item \textbf{Five edges:} Possibilities are: $1,2,5,6$ and $1,2,5,11$. 
\begin{itemize}
\item For case $1$- $7$ is adjacent to only $6$. 
\item For case $2$- $9$ is adjacent to $5$.
\end{itemize}

\item \textbf{Four edges:} We have following possibilities: $1,2,3,4$; $1,2,6,10$ (rectangles); $1,2,11,0$; $1,2,3,11$; $1,2,4,11$; $1,2,6,11$; $1,2,7,11$; $1,2,8,11$ (triangle and an edge). 

\begin{itemize}
\item Case $1$- $10$ is adjacent to only $1$.
\item Case $2$- $12$ is adjacent to only $1$. 
\item Case $3$- $8$ is adjacent to only $11$. 
\item Case $4$- $8$ is adjacent to only $11$.  
\item Case $5$- $6$ is adjacent to $2$ only.  
\item Case $6$- $0$ is adjacent only to $1$.
\item Case $7$- $0$ is adjacent only to $1$.  
\item Case $8$- $0$ is adjacent only to $1$.
\end{itemize}

\item \textbf{Three edges:} Possibilities are: $1,2,9,11$ (triangle with an independent vertex); $1,2,4,10$ (star); $1,2,3,7$; $1,2,3,12$; $1,2,6,7$; $1,2,6,9$; $1,2,12,8$; $1,2,12,9$; $1,3,4,12$; $1,3,4,6$; $1,3,6,10$; $1,3,7,10$. 
\begin{itemize}
\item Case $1$- $7$ is adjacent to only $11$.
\item Case $2$- $9$ is adjacent to only $10$. 
\item Case $3$- $8$ is adjacent only to $7$.  
\item Case $4$- $9$ is adjacent to only $12$.
\item Case $5$- $0$ is adjacent to $1$ only.  
\item Case $6$- $8$ is adjacent to only $9$.
\item Case $7$- $7$ is adjacent to $8$ only. 
\item Case $8$- $4$ is adjacent to $1$ only. 
\item Case $9$- $10$ is adjacent to $1$ only. 
\item Case $10$- $11$ is adjacent to $1$ only. 
\item Case $11$- $12$ is adjacent to $3$ only. 
\item Case $12$- $9$ are adjacent to only $10$.
\end{itemize}

\item \textbf{Two edges:} Possibilities are: $1,2,4,9$; $1,3,8,11$; $1,3,5,8$ (two edges forming a ray and one vertex independent); $1,2,7,8$; $1,3,6,11$; $1,3,5,7$ (one edge on two vertices and another edge on remaining vertices).
\begin{itemize} 
\item Case $1$- $7$ is adjacent to only $4$. 
\item Case $2$- $6$ is adjacent to $3$ only.  
\item Case $3$- $10$ is adjacent to $1$ only.
\item Case $4$- $9$ is adjacent only to $1$. 
\item Case $5$- $9$ is adjacent to only $6$.
\item Case $6$- $12$ is adjacent to only $3$.
\end{itemize}

\item \textbf{One edge:} Possibilities are: $1,2,9,7$; $1,3,9,11$; $1,3,8,10$. These cases have the difficulty that there is no vertex in remaining graph that is adjacent to only one of the vertices.   
\begin{itemize}
\item Case $1$- We consider $U=\{4,6,12\}$. $Odd(U)$ contains only $1$ among $1,2,7,9$. 

\item Case $2$- We consider $U=\{10,8\}$. $Odd(U)$ contains only $1$.

\item Case $3$- We consider $U=\{7,9\}$. $Odd(U)$ contains only $3$. 
\end{itemize}

\item \textbf{No edge:} There are no possibilities here. 
\end{enumerate}
\end{proof}

\subsection{$k$-existential closure and $k$-odd domination}

An another sufficient condition can be given for $k$-o.d., which is motivated by a more restrictive graph theoretic notion called $k$-existential closure, originally introduced by Erdos, Renyi in \cite{erdos} and defined as follows:

\begin{defi} A graph $G$ is $k$-existentially closed or $k$-e.c. if for every $k$-element subset $S$ of the vertices, and for every subset $T$ of $S$, there is a vertex not in $S$ which is joined to every vertex in $T$ and to no vertex in $S\setminus T$. 
\end{defi}

\begin{lemm} \label{ecod} If a graph $G$ is $k$-e.c, then it is $k$-o.d.
\end{lemm}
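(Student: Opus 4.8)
The plan is to derive $k$-o.d.\ from $k$-e.c.\ by verifying the sufficient condition of Lemma \ref{strcor}, taking every subset $W_i$ there to be a single vertex. The first step is to record two elementary facts about singleton neighbourhoods. For any vertex $w$ we have $Odd(\{w\})=N(w)$, since $|N(v)\cap\{w\}|$ is $1$ exactly when $v$ is adjacent to $w$; and $w\in Even(\{w\})$, because $w$ carries no self-loop and hence has an even number (namely zero) of neighbours inside $\{w\}$, so $\{w\}\subseteq Even(\{w\})$ holds automatically. Consequently a singleton $W_i=\{w_i\}$ satisfies $W_i\subseteq Even(W_i)$ for free, and the requirement $|Odd(W_i)\cap T|=1$ collapses to the purely graph-theoretic statement that $w_i$ is adjacent to exactly one vertex of $T$. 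This reduces the whole lemma to the following: for every $T\subseteq V$ with $|T|=j\le k$, there exist $k-j+1$ \emph{distinct} vertices outside $T$, each adjacent to exactly one vertex of $T$.

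The second step is to build these $k-j+1$ witnesses inductively, letting the $k$-e.c.\ property supply both the adjacency pattern and the distinctness. Fix any $t\in T$; I will make every witness adjacent to $t$ and to no other vertex of $T$. Suppose $w_1,\dots,w_m$ have already been constructed, with $0\le m\le k-j$. Put $A=T\cup\{w_1,\dots,w_m\}$, so that $|A|=j+m\le k$, and extend $A$ to a set $S$ of size exactly $k$ by adjoining arbitrary further vertices of $V\setminus A$. Applying $k$-e.c.\ to $S$ with the one-element subset $\{t\}\subseteq S$ yields a vertex $w_{m+1}\notin S$ joined to $t$ and to no vertex of $S\setminus\{t\}$; since $T\setminus\{t\}\subseteq S\setminus\{t\}$ this gives $N(w_{m+1})\cap T=\{t\}$, and $w_{m+1}\notin S$ forces it to differ from $w_1,\dots,w_m$ and to lie in $V\setminus T$. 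Iterating until $m+1=k-j+1$ produces the required family, whence Lemma \ref{strcor} gives that $G$ is $k$-o.d.

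The one point needing care, and the reason the full strength of $k$-existential closure is invoked rather than merely $j$-e.c., is enforcing distinctness: a single application of the closure property returns only one witness, and nothing prevents successive applications from returning the same vertex. Folding the previously found witnesses into $S$ removes this ambiguity but inflates $|S|$, and at the last step $A=T\cup\{w_1,\dots,w_{k-j}\}$ already has size exactly $k$, so there is no slack left; the count $k-j+1$ is precisely what the hypothesis can deliver. I would also note in passing that a $k$-e.c.\ graph has at least $2^{k}+k$ vertices, since the witnesses for the $2^{k}$ distinct subsets of a fixed $k$-set have pairwise different adjacency patterns to that set and lie outside it, so there are always enough vertices to pad $A$ up to size $k$ and the construction never runs out of room.
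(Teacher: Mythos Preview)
Your proof is correct, but it takes a different and somewhat longer route than the paper. The paper bypasses Lemma~\ref{strcor} entirely and verifies Definition~\ref{defkod} directly: given $S=\{v_1,\dots,v_k\}$ with an \emph{arbitrary} labeling, for each $i$ it applies $k$-e.c.\ to $S$ with $T=\{v_1,\dots,v_i\}$, obtaining a vertex $u_i\in V\setminus S$ adjacent to all of $v_1,\dots,v_i$ and to none of $v_{i+1},\dots,v_k$; then $U_i:=\{u_i\}$ satisfies $Odd(U_i)\cap\{v_i,\dots,v_k\}=\{v_i\}$ and $U_i\subseteq Even(U_i)$, and we are done. Since Definition~\ref{defkod} does not ask the $U_i$ to be disjoint, no iterative argument is needed and a single call to $k$-e.c.\ per index suffices.

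Your approach instead establishes the hypothesis of Lemma~\ref{strcor}, which does demand disjoint $W_i$'s; this is what forces you into the inductive construction (folding previously found witnesses back into the $k$-set so that the next witness is automatically new) and the size-count argument. That machinery is sound, and your remark that $|A|=k$ at the last step with no slack left is exactly right. The trade-off is that you do more work to earn a conclusion you could have reached directly; on the other hand, your argument shows something slightly stronger than needed, namely that $k$-e.c.\ in fact yields the full hypothesis of Lemma~\ref{strcor} with singleton witnesses.
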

\begin{proof} Consider any subset $S$ of size $k$. Assign any labeling to vertices and let it be $v_1,v_2...v_k$. Consider the subset $\{v_i,...v_k\}$ for all $i\leq k$. Let $T=\{v_1,v_2...v_i\}$. Then there exists a vertex $u_i$ in $V\setminus S$ that is adjacent to all vertices in $T$ and to no vertex in $S\setminus T$. Further, $u_i\in Even(u_i)$. So  $U_i=\{u_i\}$ satisfies the condition $Odd(U_i)\cap \left\{v_i,...v_k\right\}=v_i$. Hence the graph is $k$-o.d. 
\end{proof}

This condition allows us to sketch the nature of nonlocality in many quantum probability distributions. It is known that almost all finite graphs are $k$-e.c for some $k\ge 2$\cite{bonato1} and hence, by above lemma, quantum probability distributions corresponding to almost all graph games are genuinely $k$-way NS nonlocal, for some $k\ge 2$. Furthermore, for Paley graph states, we have the following general result:

\begin{theo} For any $k$, the quantum probability distribution obtained with Paley graph states of size greater than $k^{2}*2^{2k-2}$ is genuinely $k$-way NS nonlocal.  
\end{theo}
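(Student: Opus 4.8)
The plan is to reduce the statement to a purely graph-theoretic property of Paley graphs and then establish that property by a standard quadratic character sum estimate. By Lemma \ref{ecod} every $k$-e.c.\ graph is $k$-o.d., and by Lemma \ref{odnl} every $k$-o.d.\ graph $G$ yields a probability distribution $P_{\ket{G}}$ that is genuinely $k$-way NS nonlocal. Chaining these, it suffices to prove the core claim that the Paley graph $Pal_n$ is $k$-e.c.\ whenever $n > k^{2}2^{2k-2}$. I would state this claim and devote the rest of the argument to it.

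To prove $Pal_n$ is $k$-e.c., I would work in the field $\mathbb{F}_n$ with $n$ elements and express adjacency through the quadratic (Legendre) character $\chi$, so that $a$ and $b$ are adjacent exactly when $\chi(a-b)=1$. Fix a $k$-element set $S=\{s_1,\dots,s_k\}$ and a target subset $T\subseteq S$, encoded by signs $\epsilon_i\in\{+1,-1\}$ with $\epsilon_i=+1$ meaning we want $v$ adjacent to $s_i$. For $v\notin S$ the product $\prod_{i=1}^{k}\tfrac12\big(1+\epsilon_i\chi(v-s_i)\big)$ is the exact indicator that $v$ realizes the required adjacency pattern, since each $\chi(v-s_i)=\pm1$. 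Summing over $v\notin S$ and expanding the product over subsets $A\subseteq\{1,\dots,k\}$ produces a main term $n/2^{k}$ from $A=\emptyset$, plus corrections of the form $\tfrac{1}{2^{k}}\big(\prod_{i\in A}\epsilon_i\big)\sum_{v}\chi\big(\prod_{i\in A}(v-s_i)\big)$.

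The heart of the estimate is Weil's theorem: since the $s_i$ are distinct, for any nonempty $A$ the polynomial $\prod_{i\in A}(X-s_i)$ is squarefree of degree $|A|\ge1$, hence not a constant times a square, so $\big|\sum_{v}\chi(\prod_{i\in A}(v-s_i))\big|\le(|A|-1)\sqrt{n}$. Weighting by $\binom{k}{|A|}$ and summing over nonempty $A$ bounds the total error by $\tfrac{\sqrt n}{2^{k}}\sum_{m=1}^{k}\binom{k}{m}(m-1)=\sqrt n\big(\tfrac{k}{2}-1+2^{-k}\big)<\tfrac{k\sqrt n}{2}$. After discarding the contribution of the at most $k$ vertices of $S$ (each of weight at most $1/2$), the number of valid vertices is at least $n/2^{k}-\tfrac{k\sqrt n}{2}-\tfrac{k}{2}$, which is positive precisely when $\sqrt n>k\,2^{k-1}$, i.e.\ when $n>k^{2}2^{2k-2}$. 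This gives the $k$-e.c.\ property and hence the theorem.

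The main obstacle is the character-sum bound itself: one must invoke Weil's theorem to control $\sum_{v}\chi(\prod_{i\in A}(v-s_i))$ uniformly across all $2^{k}$ adjacency patterns, and then tally the binomial weights carefully so that the error balances the main term exactly at the threshold $k^{2}2^{2k-2}$. A secondary nuisance is the bookkeeping for the excluded vertices of $S$ and the lower-order $k/2$ term, which must be absorbed into the slack afforded by the strict inequality $n>k^{2}2^{2k-2}$.
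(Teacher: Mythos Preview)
Your plan is exactly the paper's: reduce to $Pal_n$ being $k$-e.c.\ for $n>k^{2}2^{2k-2}$ and then apply Lemmas \ref{ecod} and \ref{odnl}. The paper simply cites \cite{blass} for the $k$-e.c.\ claim, whereas you reprove it via the standard Weil character-sum estimate (which is essentially the content of \cite{blass}). One small caveat in your bookkeeping: after weakening the error to $k\sqrt{n}/2$, the inequality $n/2^{k}-k\sqrt{n}/2-k/2>0$ does not quite follow from $\sqrt{n}>k\,2^{k-1}$ alone, so keep the sharper error $\sqrt{n}\bigl(\tfrac{k}{2}-1+2^{-k}\bigr)$ you already derived, whose slack $\sqrt{n}(1-2^{-k})$ comfortably absorbs the $-k/2$ term.
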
   

\begin{proof}
As proved in \cite{blass}, a Paley graph having more than $k^{2}2^{2k-2}$ vertices is $k$-e.c. Thus by Lemma \ref{ecod} it is also $k$-o.d. Hence, Lemma \ref{odnl} allows to conclude.
\end{proof}

Note that the graphs $Pal_n$ are $3$-e.c. only for $n\ge 29$ (see \cite{bonato}), thus the general result could not be applied for our discussion related to $Pal_{13}$ in previous subsection. A detailed discussion on  known families of $k$-e.c. graphs, that include variants on Paley graphs, graphs related to Hadamard matrices etc., can be found in \cite{bonato1}.

\section{Acknowledgements}

The authors would like to thank Simon Perdrix and Peter H{\o}yer for fruitful discussions and the anonymous referees for improvements on the paper.

\section{Conclusion}

Using graph states and simple measurements, we have defined a new family of pseudo-telepathy games generalizing the Mermin game, and we have shown that the probability distribution obtained using the Paley graph states exhibits a strong multipartite nonlocality. 

It would be interesting to study more thoroughly the graph games defined here, for example to find the general expression for probability of winning the games in the best classical strategy. This question shall allow us to study the behaviour of these games with reference to inefficiency in detectors, a problem addressed in general in \cite{broadbent}. 
 
The results in \cite{hmp} show that preparing a graph state $\ket{G}$ on $n$ qubits with measurement only, requires measurements on $\delta_{loc}(G)+1$ qubits simultaneously, where $\delta_{loc}(G)$ is the minimum degree by local complementation. Thus for graph states, the minimum degree by local complementation is a measure of multipartite nonlocality for which the complete graph (the GHZ state) behave poorly ($\delta_{loc}(K_n)=1$) and that is high for Paley graph states \cite{local} ($\delta_{loc} (Pal_n) \ge \sqrt{n}$) (it is also mentioned in \cite{local} that the question of the existence of a subfamily of Paley graph states requiring measurements on $c.n$ qubits for some constant $c$ is equivalent to a known conjecture in code theory). Thus relating $k$-o.d. or a weaker condition for genuine $k$-way NS nonlocality with the minimum degree by local complementation could give better bounds for Paley graphs (in this paper we proved  only genuine ($\log k$)-way NS nonlocality for Paley graph states).
It would also be interesting to have a combinatorial necessary condition for genuine NS $k$-way nonlocality for the probability distributions obtained by the graph games.

Another direction is the extension where a player can have a set of qubits (a set of vertices of the graph) both in the case where the input is binary and when the input is non-binary. An interesting result regarding simulability when input can be non-binary has already be obtained in \cite{methot} in which the simulability of the magic square game (that involves 3 possible inputs) using PR boxes has been studied.

\end{document}